\newcommand{\ta}{\ensuremath{\mathtt{a}}}
\newcommand{\GraphProblemFancyName}{EDT}
\newcommand{\NP}{\textsf{NP}}
\newcommand{\PSPACE}{\textsf{PSPACE}}
\newcommand{\LOGSPACE}{\textsf{L}}
\newcommand{\NLOGSPACE}{\textsf{NL}}
\newcommand{\PTIME}{\textsf{P}}
\DeclareMathOperator{\lang}{\mathcal{L}}
\newcommand{\decdhm}{\ensuremath{\operatorname{dec}_\text{dir\_hyp\_mul}}}
\tikzset{>={Latex[width=3mm,length=3mm]}}
\newcommand{\intreg}{\ensuremath{\mathit{int_{\mathrm{Reg}}}}}
\newcommand{\ov}[2]{\genfrac{}{}{0pt}{}{#1}{#2}}
\theoremstyle{definition}
\newtheorem{definition}{Definition}
\theoremstyle{plain}
\newtheorem{theorem}{Theorem}
\newtheorem{corollary}{Corollary}
\theoremstyle{remark}
\newtheorem{remark}{Remark}
\title{From Decidability to Undecidability by Considering Regular Sets of Instances\thanks{The author was partially supported by DFG (FE 560 / 9-1).}}
\author{
	Petra Wolf \\
	Abteilung Informatikwissenschaften -
	Fachbereich 4\\
	Universität Trier -
	Germany\\
	\texttt{wolfp@informatik.uni-trier.de}\\
	\url{https://www.wolfp.net/}
}
\begin{document}
	\maketitle
\begin{abstract}
	We are lifting classical problems from single instances to regular sets of instances. The task of finding a positive instance of the combinatorial problem $P$ in a potentially infinite given regular set is equivalent to the so called \intreg-problem of $P$, which asks for a given DFA $A$, whether the intersection of $P$ with $\lang(A)$ is non-empty.
	The \intreg-problem generalizes the idea of considering multiple instances at once and connects classical combinatorial problems with the field of automata theory.
%
%
	While the question of the decidability of the $\intreg$-problem has been answered positively for several \NP\ and even \PSPACE-complete problems, we are presenting some natural problems even from \LOGSPACE\ with an undecidable $\intreg$-problem. We also discuss alphabet sizes and different encoding-schemes elaborating the boundary between problem-variants with a decidable respectively undecidable $\intreg$-problem.  
\keywords{Deterministic finite automaton \and Regular intersection emptiness problem \and Undecidability}
\end{abstract}
%
%
%
\section{Introduction and Motivation}
In many fields multiple problem instances are considered all at once and they are accepted if there is at least one positive instance among them. The instances are described through a strongly compressed representation. For instance, in \emph{graph modification problems}\footnote{A Dagstuhl seminar on ``Graph Modification Problems'' was held in 2014 \cite{bodlaender2014graph}} a graph $G$ together with several graph editing operations is given and one asks whether $G$ can be transformed into a graph $G'$ with a certain property using up to $n$ editing operations \cite{bodlaender2014graph,gao2010survey,LiuWanGuo2014}. Here, the graph $G$ represents the finite set of graphs which can be generated by $G$ using up to $n$ editing operations. 
Another example are problems with \emph{uncertainty} in the instance \cite{DBLP:conf/icadlt/ChaariCAT14,Hla2012} where some parameters of the instance are unknown and therefore stand for a variety of values.
Finding a positive instance among plenty of candidates is also a task in synthesis problems \cite{desel1996synthesis}. In \cite{desel1996synthesis} the authors generate a finite set of candidate Petri nets among which they search for a solution. The synthesis of an object with a certain property can be seen as the search for an object with this property among several candidates. 

%
A natural generalization of the task of finding a positive instance in a finite set of instances is to search in an \emph{infinite} set of instances. 
A well studied class of potentially infinite languages are the regular languages which are also in a compressed way represented by finite automata or regular expressions.
We call $A=(Q, \Sigma, \delta, q_0, F)$ a deterministic finite automaton (DFA for short) if $Q$ is a finite set of states, $\Sigma$ a finite alphabet, $\delta \colon Q \times \Sigma \to Q$ a total transition function, $q_0 \in Q$ the start state and $F \subseteq Q$ the set of final states. We generalize $\delta$ to words in the usual way. We denote the language accepted by $A$ with $\lang(A) = \{w \in \Sigma^* \mid \delta(q_0, w) \in F\}$.
Asking whether the accepted language of a DFA $A$ contains a positive instance of a problem $P$ is equivalent to asking whether the intersection $P \cap \lang(A)$ is non-empty. This question was introduced in~\cite{guler2018deciding} as the \emph{$\intreg$}-problem of $P$ or $\intreg(P)$ for a fixed problem $P$.
\begin{definition}[$\intreg(P)$]
	\textit{Given:} DFA $A$.
	\textit{Question:} Is $\lang(A) \cap P \neq \emptyset$?
\end{definition}
%
In~\cite{DBLP:journals/iandc/AndersonLRSS09,DBLP:journals/eik/HorvathKK87,Ito1988}, $\intreg(L)$ was studied for languages $L$ with low computational complexity, but which describe structural word-properties that have high relevance for combinatorics on words and formal language theory (e.g., set of primitive words, palindromes, etc.). There, (efficient) decision procedures are obtained.

The $\intreg$-problem has  been studied independently under the name \emph{regular realizability problem} $RR(L)$, where the \emph{filter language} $L$ plays the role of problem $P$  above, i.\,e., $RR(L) = \intreg(L)$ 
(see~\cite{DBLP:journals/iandc/AndersonLRSS09,DBLP:journals/corr/Rubtsov15,abs-1105-5894,DBLP:conf/csr/TarasovV11,Vyalyi11,Vyalyi13uniJournal,VyalyiR15}),  motivated by  computational complexity questions. 
The aim was to present with the $RR$-problem `a specific class of algorithmic problems that
represents complexities of all known complexity classes [. . .] in a unified way'~\cite{Vyalyi13uniJournal}.
It turned out that $RR$-problems are universal in the sense that for any problem
$P$, there exists an $RR$-problem $RR(L)$ with the same complexity (note that $P$
and $L$ are different languages). In~\cite{VyalyiR15}
the authors focused on context-free filter languages and presented languages $L$ for which $RR(L)$ is either \PTIME-complete,
\NLOGSPACE-complete or has an intermediate complexity. In~\cite{DBLP:conf/csr/TarasovV11} the decidability of
the $RR$-problem with filter languages over permutations of binary words was studied.

In contrast, the line of research in~\cite{guler2018deciding,Wolf:Thesis:2018,DBLP:conf/dcfs/000219,DBLP:journals/corr/abs-2003-05826} aims to use the \intreg-problem as a tool to get insights into classes of hard problem as for instance the class of \NP\ and \PSPACE-complete problems. While the decidability of $\intreg(P)$ for hard problems $P$ such as \textsc{SAT}~\cite{guler2018deciding}, \textsc{ILP}~\cite{DBLP:conf/dcfs/000219}, \textsc{Vertex Cover}~\cite{DBLP:journals/corr/abs-2003-05826} and \textsc{TQBF}~\cite{guler2018deciding} is known, we present in this work problems, with a complexity ranging from contained in \LOGSPACE\ to \PSPACE-completeness, with an undecidable \intreg-problem. 
These results indicate that the decidability of the \intreg-problem of a language does not directly coincide with its computational complexity. 
This study rises the natural question what for instance \NP-complete problems with a decidable \intreg problem have in common that separates them from \NP-complete problems with an undecidable \intreg-problem. 
%
We also examine for some problems the size of the input alphabet and the encoding scheme resulting in different decidability results of the considered \intreg-problem. 

This paper is structured as follows. First, we discuss machine languages for several complexity classes. Then, we consider the problems of bounded and corridor tiling, followed by bounded PCP. We will show that all of these problems have an undecidable \intreg-problem.
%
%
Next, we investigate the \PSPACE-complete problem of {\sc Equivalence of Regular Expressions} and prove that the problem in a shuffled encoding has an undecidable $\intreg$ problem. As the proof only uses the concatenation operator of regular expressions, we get the undecidability of \intreg\ of the so called {\sc String Equivalence Modulo Padding} problem in a shuffled encoding, which lies in \LOGSPACE. For this problem we will discuss different alphabet sizes and encoding schemes and show that all other considered variants of this problem have a decidable \intreg-problem. Finally, we present a graph problem on directed multi-hyper-graphs with an undecidable \intreg-problem. This contrasts the results in~\cite{DBLP:journals/corr/abs-2003-05826} where classes of graph problems with a decidable \intreg-problem are identified. 

We expect the reader to be familiar with regular languages and their description through finite automata and regular expressions. The reader should also be familiar with the complexity classes \LOGSPACE, \NLOGSPACE, \NP, and \PSPACE. We refer to the textbooks \cite{garey1979computers} and \cite{hopcroft1969formal} for details. 
%
%
%
%
\section{Machine Languages}
For several complexity classes, we can define \emph{machine languages} which are complete for their complexity classes. We will show that the following machine languages have an undecidable $\intreg$-problem. The \intreg-problem of the machine language for NP was already discussed in \cite{guler2018deciding} and is listed here for the sake of completeness.
\begin{definition}[Machine Language for NL]
	\ \\
	\textit{Given:} Encoded nondeterministic Turing machine $\langle M \rangle$, input-word $x$, and a string $a^n$ with $n\in \mathbb{N}$.\\
	\textit{Question:} Does $M$ accept $x$ visiting only $\log(n)$\,different tape-positions?\\
	\textit{Encoding:} $L_{\text{NL}} = \{\langle M \rangle \$ x \$ a^n \mid M \text{ is an NTM accepting } x\text{ in }\log(n)\text{ space}\}$.
\end{definition}
The language $L_{\text{NL}}$ is complete for the class \NLOGSPACE. Every language in \NLOGSPACE\ can be accepted by a non-deterministic Turing machine which is space-bounded by a function $f \in \Theta(\log)$. Since $f$ is logspace-constructible, there exists a deterministic TM $M_f$ which computes $f(n)$ on the input $0^n$ in logarithmic space. 
Hence, every fixed problem in \NLOGSPACE\ can be reduced to $L_{\text{NL}}$ by hard-wiring the NTM $M$ which decides the problem and is space-bounded by $f$, followed by the input word $w$ and a unary string of size $2^{f(|w|)}$. Note that $f(|w|)$ is  logarithmically smaller than $|w|$ and hence can be stored using $\log(|w|)$ many cells. A logarithmically space-bounded TM can compute an output string which is exponentially in the size of its used memory. As can be easily verified $L_{\text{NL}} \in$  \NLOGSPACE.

The Machine Language for \NP, in short $L_{\text{NP}}$ is defined analogously demanding that $x$ is accepted in $n$ steps, while the Machine Language for \PSPACE, in short $L_{\text{PSPACE}}$ demands $x$ to be accepted in $n$ space. With similar arguments $L_{\text{NP}}$ is complete for \NP\ and $L_{\text{PSPACE}}$ is complete for \PSPACE.

%
\begin{theorem}
	$\intreg(L_{\text{NL}})$, $\intreg(L_{\text{NP}})$, and $\intreg(L_{\text{PSPACE}})$ are undecidable.
\end{theorem}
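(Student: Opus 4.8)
The plan is to reduce from the (undecidable) acceptance problem for Turing machines, exploiting the fact that a machine language encodes three separate components — the machine $\langle M \rangle$, the input $x$, and a unary resource budget $a^n$ — only the last of which needs to vary. The central observation is that if we fix $M$ and $x$ and let $n$ range over all of $\mathbb{N}$, the resulting set $\{\langle M \rangle \$ x \$ a^n \mid n \in \mathbb{N}\}$ is the regular language $\langle M \rangle \$ x \$ a^*$, which a DFA can recognize by hard-wiring the fixed prefix $\langle M \rangle \$ x \$$ and then looping on the letter $a$. Consequently, asking whether the intersection of this regular language with a machine language is nonempty is exactly asking whether $M$ accepts $x$ under \emph{some} resource budget, i.e.\ whether $M$ accepts $x$ at all.

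Concretely, for $L_{\text{NP}}$ I would, given an instance $(M,x)$ of the acceptance problem, build in polynomial time the DFA $A_{M,x}$ with $L(A_{M,x}) = \langle M \rangle \$ x \$ a^*$. Then $L(A_{M,x}) \cap L_{\text{NP}} \neq \emptyset$ holds iff there is some $n$ with $\langle M \rangle \$ x \$ a^n \in L_{\text{NP}}$, iff $M$ accepts $x$ within $n$ steps for some $n$. Since the step bound is monotone — any accepting computation halts after a definite number of steps $n_0$, and then the word with budget $a^{n_0}$ lies in the intersection — this is equivalent to $M$ accepting $x$. As acceptance is undecidable, $\intreg(L_{\text{NP}})$ is undecidable; this recovers the result of \cite{guler2018deciding}.

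The same reduction transfers verbatim to $L_{\text{PSPACE}}$ and $L_{\text{NL}}$, with only the translation between the unary length $n$ and the granted resource changing. For $L_{\text{PSPACE}}$ the budget $a^n$ grants $n$ cells of space, and since every accepting computation uses only finitely many cells, the intersection is nonempty iff $M$ accepts $x$. For $L_{\text{NL}}$ — where $M$ is an NTM and the budget $a^n$ grants only $\log(n)$ space — the point to check is that an accepting computation using some finite space $s$ is witnessed by choosing $n = 2^s$, so that $\log(n) = s$; monotonicity of the space bound in $n$ then guarantees the intersection is nonempty precisely when $M$ accepts $x$. I expect the only delicate step to be pinning down this resource-to-budget translation (and confirming that the acceptance problem is undecidable in each model, which it is, being r.e.-complete for both deterministic and nondeterministic machines); the regularity of $\langle M \rangle \$ x \$ a^*$ and the reduction itself are routine.
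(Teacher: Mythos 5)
Your proof is correct, and it rests on the same key mechanism as the paper's: because the resource budget appears as a separate unary block, a DFA can existentially quantify over it by looping on $a^*$, turning a resource-bounded membership question into an unbounded one. The difference is the source problem of the reduction. The paper reduces from the non-emptiness problem $L_{\neq \emptyset} = \{\langle M \rangle \mid L(M) \neq \emptyset\}$ and accordingly lets the regular set range over both the input and the budget, $R = \{\langle M \rangle \$ x \$ a^n \mid x \in \Sigma^*,\ n \geq 0\}$, so that $R \cap L_{\text{NL}} \neq \emptyset \Leftrightarrow L(M) \neq \emptyset$; you instead reduce from the acceptance problem, fixing $x$ and sweeping only the budget via $\langle M \rangle \$ x \$ a^*$. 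Both source problems are undecidable for nondeterministic machines, and you justify both directions of your equivalence correctly: every accepting computation is finite, hence has a definite time and space consumption, and your $n = 2^s$ translation together with the monotonicity remark handles the one genuinely delicate point in the logspace case. Your variant is marginally more elementary --- the regular language is a single fixed word followed by $a^*$, with no quantification over inputs --- while the paper's choice of $L_{\neq \emptyset}$ lets the three equivalences line up in a single display; the two reductions are interchangeable here, and neither buys extra generality.
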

\begin{proof}
	We give a reduction from the undecidable non-emptiness-problem for recursively enumerable sets \cite{hopcroft1969formal} defined as $L_{\neq \emptyset} := \{\langle M \rangle \mid M$ is a nondeterministic TM with  $\lang(M) \neq \emptyset \}$.
	Let $\langle M \rangle$ be an arbitrary encoded Turing machine with the input alphabet $\Sigma$. We define the regular language 
	$f(\langle M \rangle) := R := \{\langle M \rangle \$ x \$ a^n \mid x \in \Sigma^*, n \geq 0 \}.$ 
	Then,
	$f(\langle M \rangle) \in \intreg(L_{\text{NL}}) \Leftrightarrow R \cap L_{\text{NL}} \neq \emptyset \Leftrightarrow \lang(M) \neq \emptyset \Leftrightarrow \langle M \rangle \in L_{\neq \emptyset}$.
	The same holds for $L_{\text{NP}}$ and $L_{\text{PSPACE}}$.
	Since the emptiness-problem for recursive enumerable sets is undecidable, the undecidability of the problems $\intreg(L_{\text{NL}})$, $\intreg(L_{\text{NP}})$, and $\intreg(L_{\text{PSPACE}})$ follows.
\end{proof}

\section{Bounded and Corridor Tiling}
The next problem we want to investigate is about the \emph{tiling of the plane}. 
For a given set of tile types and a fixed corner tile, the question is to fill a plane with the given tiles under some conditions. While the problem for an infinite plane is undecidable \cite{Kari08,StraubingTiling}, it becomes \NP-complete if we restrict the plane to an $n\times n$-square and preset the tiles on the edges of the square; it becomes \PSPACE-complete if we only restrict the width  with preset tiles and ask for a finite height, such that the plane can be tiled according to the preset tiles \cite{van1997convenience}.

First, we will give a formal definition of the problem {\sc Bounded Tiling}. Then, we will show that this problem has an undecidable $\intreg$-problem by reducing $L_{\neq \emptyset}$ to the problem $\intreg$({\sc Bounded Tiling}).

A \emph{tile} is a square unit where each of the edges is labeled with a color from a finite set $C$ of colors. The color assignment is described by \emph{tile types}. A tile type is a sequence $t = (w, n, e, s)$ with $w,n,e,s \in C$ of four symbols representing the coloring of the left, top, right, and bottom edge color. We denote with $t_w$, respectively $t_n$, $t_e$, $t_s$, the first, respectively second, third, and forth entry of the tuple $t$. Tiles can be regarded as instances of tile types. A tile must not be rotated or reflected. In the following problem, we give a finite set of tile types as input. From every tile type arbitrary many tiles can be placed. The tiles have to cover up a square grid region such that adjacent edges have to have the same color. The grid comes with an \emph{edge coloring} 
which contains for each border of the square grind a sequence of colors presetting the adjacent color of tiles resting on the edge.
A \emph{tiling} is a mapping from the square grid region to a set of tile types. With $\langle T \rangle$ we denote a proper encoding of the tile type set $T$ and with $[n]$ we denote the set $\{1, 2, \dots, n\}$.
We call $f \colon [n] \times [n] \to T$ a \emph{tiling function} if for all $i, j \in [n]$ it holds that $f(i,j)_e=f(i+1,j)_w$ for $i < n$ and $f(i,j)_n=f(i,j+1)_s$ for $j<n$ meaning that adjacent edges of the tiles have the same color. Here, the bottom left square of a grid region is indexed by $(1,1)$. 

\begin{definition}[{\sc Bounded Tiling}]
	\ \\
	\textit{Given:} Finite set $T$ of tile types with colors from a finite color set $C$ and an $n \times n$ square grid region $V$ with a given edge coloring. \\
	\textit{Question:} Is there a tiling function $f \colon [n] \times [n] \to T$ that tiles $V$ extending the edge coloring?\\
	\textit{Encoding:} 
	$\langle T \rangle$ followed by an edge coloring $\$l\$t\$r\$b$, $l=l_1\#l_2\#\dots\#l_{n}$, with
		$t=t_1\#t_2\#\dots\# t_n$,	 
		$r=r_1\#r_2\#\dots\#r_{n}$,
		$b=b_1\#b_2\#\dots \# b_n$
		with $ l_i, t_i, r_i, b_i \in C$.
\end{definition}

Howard Straubing gives in his article ``Tiling Problems'' \cite{StraubingTiling} a reduction from the complement of the halting problem to the problem of tiling an infinite plane. Therefore, he gives an algorithm ``that takes input $\langle M \rangle$ and produces the associated $\langle T, c\rangle$'' (where $c$ is the given corner tile in the unrestricted case of the problem). The tiles represent every possible transition of the Turing machine and are constructed in a way that correctly tiled rows correspond to configurations of the given Turing machine. The four colors of the tiles also ensure that two adjacent rows represent two consecutive configurations. Therefore, the infinite plane can only be tiled if and only if the Turing machine runs forever.

Peter van Emde Boas \cite{van1997convenience} uses a similar construction to simulate Turing machines and shows that the {\sc Bounded Tiling} problem is \NP-complete.
For a given nondeterministic Turing machine, the possible transitions and tape cell labelings are transformed into a set of tile types.
The input word, padded with blank symbols, is encoded in the bottom edge coloring $b$ and a distinguished accepting configuration is encoded in the top edge coloring $t$. 
The left and right borders are colored with the fixed color \emph{white} which is a color only occurring on vertical edges and which do not represent any state or alphabet letter of the Turing machine. So, white can be seen as a neutral border color. 
Blank symbols are trailed to the input word to enlarge the size of the square field to the exact time bound of the Turing machine. The Turing machine is altered in a way that it accepts with one distinguished accepting configuration. The tile types are constructed in a way that this accepting configuration can be repeated over several adjacent rows. 
Therefore, the constructed edge colored square region can be correctly tiled matching the edge coloring if and only if the given Turing machine accepts the input word within its time bound.


With that construction in mind, we will now prove that the $\intreg$-problem for {\sc Bounded Tiling} is undecidable.

\begin{theorem}
	\label{thm:tiling}
	The problem $\intreg(${\sc Bounded Tiling}$)$ is undecidable.
\end{theorem}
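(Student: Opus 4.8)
The plan is to reuse the reduction from the non-emptiness problem $L_{\neq\emptyset}$ employed in the previous theorem, the difference being that the effort now lies in building, from an encoded Turing machine $\langle M\rangle$, a DFA $A$ whose language captures \emph{all} candidate tiling instances corresponding to $M$ accepting \emph{some} input on an $n\times n$ grid of \emph{some} size $n$. I would start from van Emde Boas's construction recalled above: from $M$ alone (independently of any input) it produces a fixed tile type set $T_M$ such that the edge-coloured $n\times n$ square with bottom edge $b$ encoding a blank-padded input word $x$, top edge $t$ encoding the distinguished accepting configuration, and white left and right borders $l,r$, can be tiled if and only if $M$ accepts $x$ within an $n\times n$ computation. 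The reduction $f$ then maps $\langle M\rangle$ to (an encoding of) the DFA $A$ recognising
$$R := \{\, \langle T_M\rangle\, \$\, l\, \$\, t\, \$\, r\, \$\, b \mid l,r \text{ are all-white},\ t \text{ matches the accepting-configuration pattern},\ b \text{ encodes a blank-padded input}\,\}.$$

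The prefix $\langle T_M\rangle$ and the separators $\$$ are constants, and each edge word is described by a simple regular pattern over the colour alphabet: $l$ and $r$ are repetitions of the white colour separated by $\#$; the top edge is the accepting-state colour followed by blank colours; and the bottom edge is the initial-state-with-$x_1$ colour, followed by arbitrarily many input-letter colours, followed by blank colours. Hence $R$ is regular, and an automaton $A$ for it is effectively computable from $\langle M\rangle$, so $f$ is a computable reduction. I would then establish the chain
$$A\in\intreg(\textsc{Bounded Tiling}) \iff L(A)\cap \textsc{Bounded Tiling}\neq\emptyset \iff \exists x\in\Sigma^*\ \exists n\colon M \text{ accepts } x \text{ within an } n\times n \text{ grid} \iff L(M)\neq\emptyset,$$
whence undecidability of $L_{\neq\emptyset}$ transfers to $\intreg(\textsc{Bounded Tiling})$.

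The step I expect to be the main obstacle is guaranteeing that $R$ is genuinely regular, despite the fact that a \emph{well-formed} instance requires all four edge words to have the same length $n$, a constraint no DFA can enforce. The resolution, which I would make explicit, is that this length constraint need not be imposed at all: any string in $R$ whose four edges differ in length is simply not a valid square-grid instance, hence is not a member of \textsc{Bounded Tiling} and contributes nothing to the intersection. It therefore suffices to check two things: first, that for every $x$ and $n$ with $M$ accepting $x$ within the $n\times n$ grid the corresponding genuine instance lies in $R$, which holds since it matches all the patterns with equal edge lengths; and second, that every \emph{valid} instance in $R$ is tileable exactly when $M$ accepts its bottom-edge input inside the grid, which is precisely what van Emde Boas's construction delivers. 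Care must also be taken that the chosen accepting configuration of $M$ is unique and can be repeated on the rows above the halting step, so that any sufficiently large square (with $n$ at least the time and space used by an accepting run, and at least $|x|$) can be completed; this is exactly the repeatability feature of the construction recalled above.
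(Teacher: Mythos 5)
Your proposal is correct and follows essentially the same route as the paper: a reduction from $L_{\neq\emptyset}$ via the van Emde Boas/Straubing tile construction, with a regular set of the form $\langle T\rangle\,\$\,l\,\$\,t\,\$\,r\,\$\,b$ ranging over all inputs and all grid sizes, and the same two key observations --- that ill-formed (unequal-length) edge words simply drop out of the intersection, and that the unique, repeatable accepting configuration lets any accepting run be padded into a sufficiently large square. The paper makes the uniqueness of the accepting configuration explicit by first modifying $M$ to a machine $N$, exactly as you note at the end.
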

\begin{proof}
	We give a reduction from the undecidable problem $L_{\neq \emptyset}$.
	Let $\langle M \rangle$ be the encoding of an arbitrary NTM. 
	We construct a regular language $R$ which contains a positive {\sc Bounded Tiling} instance if and only if $M$ accepts at least one word. 
	We alter the machine $M$ to an NTM $N$ which behaves like $M$ except having only one distinguished accepting configuration, i.e., an empty tape with the head on the first position of the former input word.
	According to Straubing \cite{StraubingTiling} and van Emde Boas \cite{van1997convenience}, there is an algorithm which, given a TM $N$, produces the corresponding set of tile types $T$ such that a correct extending tiling of a given edge colored square field corresponds to a sequence of successive configurations of the given machine, starting on an input word represented through the coloring of the bottom border.
	
	Let $T_N$ be the corresponding tile type set for the NTM $N$ and let $C_N$ be the set of colors appearing in $T_N$. 		
	Let $C_{\Sigma} \subseteq C_N$ be the subset of colors representing input alphabet symbols, let $\diamond \in C_N$ be the \emph{white} color representing a white vertical border edge of the square grid, and let $\square \in C_N$ be the color representing an empty tape cell. Finally let $q_f \in C_N$ be the color representing the accepting state of the Turing machine.
	We define the regular set $R$ as
	$R = \lang\left(\{\langle T_N \rangle\ \$\  
	{\diamond}^*\ \$\ 
	q_f{\square}^*\ \$\ 
	{\diamond}^*\ \$\
	{C_{\Sigma}}^*{\square}^*\}\right).$
	The set $R$ consists of the set of tile types for the NTM $N$ together with edge colorings for every possible input word and every possible size of the field $V$. The top row will always contain the accepting configuration of $N$ padded with arbitrary many blank symbols. 
	The left and right borders of the field $V$ can consist of arbitrary many white edges, while the bottom row can encode every possible input word with arbitrary many added blank symbols allowing an arbitrary time bound for the Turing machine.
	Note that the edge coloring does not have to define a square, but the square shape is also contained in the set $R$ for every input word and every number of padding symbols. 
	Therefore, for every input word $w$, the set $R$ contains every size of squared fields with $w$ encoded in the bottom edge coloring. The tile type set of $R$ is constructed in a way that in a valid tiling adjacent rows will represent successive configurations of the Turing machine. So, for every number of steps the TM makes on the input word, there is a square field, with the input word encoded, in the set $R$ preventing enough space for the configurations of the TM. This brings us to our main claim,
	$R \cap \text{\sc Bounded Tiling} \neq \emptyset \Leftrightarrow \lang(N) \neq \emptyset$.
\end{proof}

With the same argument, we can show that the \PSPACE-complete problem {\sc Corridor Tiling} \cite{van1997convenience} also has an undecidable $\intreg$-problem.
\begin{definition}[{\sc Corridor Tiling}]
	\ \\
	\textit{Given:} Finite set $T$ of tile types with colors from a finite color set $C$, a top edge coloring $t$, and a bottom edge coloring $b$ of a grid region $V$, both of length $n$.\\
	\textit{Question:} Is there a finite height $h$ and a tiling function $f \colon [n] \times [h] \to T$ that tiles $V$\,extending the edge coloring.
\end{definition}
\begin{corollary}
	\label{cor:cor_tiling}
	The problem $\intreg(\text{\sc Corridor Tiling})$ is undecidable.
\end{corollary}
\begin{proof}
	The proof works analogously to the proof of Theorem \ref{thm:tiling} with a reduction from $L_{\neq \emptyset}$, the only difference is that $R$ only encodes the bottom and top borders and no left and right borders. 
\end{proof}

\section{Bounded PCP}
Another undecidable problem, which becomes decidable if we restrict the size of the potential solution, is the {\sc Post's Correspondence Problem} (in short {\sc PCP}). We show that the \NP-complete version {\sc Bounded Post Correspondence Problem} \cite{garey1979computers} (in short {\sc BPCP}) has an undecidable $\intreg$-problem by a reduction from the unrestricted undecidable {\sc PCP} problem \cite{hopcroft1969formal}. 
\begin{definition}[{\sc BPCP}]
	\ \\
	\textit{Given:} Finite alphabet $\Sigma$, two sequences $A=(a_1, a_2, \dots, a_n)$, $B=(b_1, b_2, \dots, b_n)$ of strings from $\Sigma^*$, and a positive integer $K \leq n$.\\
	\textit{Question:} Is there a sequence $i_1, i_2, \dots, i_k$  of $k \leq K$ (not necessarily distinct) positive integers $i_j \in [n]$ such that $a_{i_1}a_{i_2}\dots a_{i_k} = b_{i_1}b_{i_2}\dots b_{i_k}$?\\
	\textit{Encoding:} $L_{BPCP} := \{a_1\#a_2\#\dots\#a_n\$b_1\#b_2\#\dots\#b_n\$\operatorname{bin}(K) \mid K \leq n \wedge A=(a_1, a_2, \dots, a_n), B=(b_1, b_2, \dots, b_n) \text{ is a PCP instance with a solution} \leq K \}$.
\end{definition}
\noindent The problem {\sc PCP} is defined analogously but does not contain a bound $K$ on the size of a solution.
\begin{theorem}
	The problem $\intreg$$(${\sc BPCP}$)$ is undecidable.
\end{theorem}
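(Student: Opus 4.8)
The plan is to reduce from the unrestricted, undecidable \textsc{PCP}, mirroring the structure of the earlier reductions from $L_{\neq\emptyset}$. Given a \textsc{PCP} instance consisting of two sequences $A=(a_1,\dots,a_n)$, $B=(b_1,\dots,b_n)$ over an alphabet $\Sigma$, I want to build (a DFA for) a regular language $R$ with $R\cap L_{BPCP}\neq\emptyset$ if and only if the \textsc{PCP} instance has a solution. The feature that distinguishes this from the tiling reductions is the side condition $K\le n$ in \textsc{BPCP}: a \textsc{PCP} solution can be arbitrarily long, but the length bound $K$ may never exceed the number of pairs. Hence no single instance can cover all solution lengths, and the number of pairs must be allowed to grow together with $K$. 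The idea is therefore to pad the instance with dummy pairs so that arbitrarily large bounds become admissible, while guaranteeing that the padding introduces no new solutions.

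Concretely, I would introduce two fresh symbols $d,e\notin\Sigma$ with $d\neq e$ and let $R$ consist of all strings of the form
\[
\underbrace{a_1\#\dots\#a_n}_{\text{original }A}(\#d)^{*}\ \$\ \underbrace{b_1\#\dots\#b_n}_{\text{original }B}(\#e)^{*}\ \$\ \{0,1\}^{*}.
\]
This is a concatenation of fixed words, Kleene stars, and the set of all binary strings, so $R$ is regular and a DFA for it is computable from the \textsc{PCP} instance. The crucial point is that the three ``free'' parameters --- the number of $d$'s appended to $A$, the number of $e$'s appended to $B$, and the binary string playing the role of $\operatorname{bin}(K)$ --- vary completely independently in $R$, so no non-regular counting is needed. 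The constraints that actually matter, namely that $A$ and $B$ have equally many pairs and that $K$ does not exceed the number of pairs $N=n+j$ of the padded instance, are not imposed by $R$ at all; they are enforced automatically by membership in $L_{BPCP}$, which contains only syntactically valid \textsc{BPCP} encodings that are, in addition, positive instances.

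For correctness I would argue both directions. For the forward direction, if the \textsc{PCP} instance has a solution using $\ell$ indices, I pick the string in $R$ with $j=\ell$ padding symbols on both sides (so $N=n+\ell$) and binary part $\operatorname{bin}(\ell)$; then $K=\ell\le N$, the encoding is a valid \textsc{BPCP} instance, and the original solution of length $\ell\le K$ witnesses a positive instance, so this string lies in $R\cap L_{BPCP}$. For the backward direction, any $s\in R\cap L_{BPCP}$ is a valid positive \textsc{BPCP} instance whose first $n$ pairs are the original ones and whose remaining pairs are all $(d,e)$. The key lemma is that no dummy pair can occur in a solution: since $d$ occurs only in top strings and in no bottom string, a matching concatenation (top $=$ bottom) cannot contain $d$, hence uses no dummy index. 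Consequently every solution uses only original pairs and is therefore a solution of the original \textsc{PCP} instance, and the claimed equivalence follows; undecidability of $\intreg(\textsc{BPCP})$ then follows from the undecidability of \textsc{PCP}.

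The step I expect to require the most care is exactly this dummy-freeness lemma, together with checking that letting the padding lengths and the bound range independently does not break the reduction. The subtlety is that in \textsc{PCP} the top and bottom concatenations need not be aligned pair by pair, so I cannot argue locally that a single dummy pair causes a mismatch; the clean argument is global and by symbol counting --- $d$ appears in $A$-pairs only and in no $B$-pair, so a match forces zero occurrences of $d$ and hence zero dummy pairs. Here the two \emph{distinct} fresh symbols $d\neq e$ are essential, since a shared symbol would make a single dummy pair a trivial length-one solution. Everything else --- regularity of $R$, computability of the DFA, and the membership bookkeeping --- is routine.
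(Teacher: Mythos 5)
Your reduction is correct and shares the overall architecture of the paper's proof: reduce from unrestricted \textsc{PCP}, build a regular set consisting of the fixed instance followed by pumpable padding pairs and $\$\{0,1\}^*$ for the bound, and let membership in $L_{BPCP}$ (rather than $R$ itself) enforce the non-regular constraints that the two lists have equal length and that $K\le N$. The one genuine difference is the padding gadget, and hence the key lemma. The paper pads both lists by repeating the \emph{last existing pair}, using $(\#a_n)^*$ and $(\#b_n)^*$; since every padding pair is a copy of the pair $(a_n,b_n)$ already present, solutions of the padded instance correspond to solutions of the original instance trivially (any duplicate index can be replaced by $n$), so no separate lemma is needed. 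You instead pad with fresh pairs $(d,e)$ with $d\neq e$, $d,e\notin\Sigma$, and prove a dummy-freeness lemma by a global counting argument on the symbol $d$ --- correctly observing that a local pair-by-pair mismatch argument would not suffice, and that distinctness of $d$ and $e$ is essential to avoid the trivial length-one solution. Your variant costs an extra (correct) lemma and an enlarged alphabet, while the paper's costs nothing but presupposes reuse of an existing pair; in exchange, your forward direction, taking $j=\ell$ dummy pairs and $K=\ell$ so that $K\le n+\ell$, is actually cleaner than the paper's choice of padding exponent $m-n$, which is ill-defined when the solution length $m$ is smaller than $n$ (a case the paper's construction handles only because zero padding with $K=m\le n$ already suffices there). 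Both arguments are sound and establish the same equivalence $R\cap L_{BPCP}\neq\emptyset \Leftrightarrow$ the \textsc{PCP} instance is solvable.
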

\begin{proof}
	We give a reduction {\sc PCP} $\leq \intreg$$(${\sc BPCP}$)$.
	Let $A=(a_1, a_2, \dots, a_n)$ and $B=(b_1, b_2, \dots, b_n)$ be a {\sc PCP} instance. We construct a regular language~$R$ consisting of the given {\sc PCP} instance combined with every possible solution bound~$K$. Since $K$ is bounded by the length of list $A$ and $B$, we will pump those lists up by repeating the last list element of both lists arbitrarily often. Because the same element can be picked multiple times, adding elements already appearing in the given lists does not change the solvability of the instance. We define $R$ as
	$R= \{a_1\#a_2\#\dots\#a_n(\#a_n)^*\$b_1\#b_2\#\dots\#b_n(\#b_n)^*\$\{0,1\}^*\}.$
	It holds that $R\,\cap$\,{\sc BPCP} $\neq \emptyset$ if and only if there is a sequence of indexes $i_1, i_2, \dots, i_m$ such that $a_{i_1}a_{i_2}\dots a_{i_m} = b_{i_1}b_{i_2}\dots b_{i_m}$.
%
\end{proof}


\section{Regular Expressions in a Shuffled Encoding}
In this next section we show that the problem of {\sc Equivalence of Regular Expressions} (in short $\equiv_{\text{\sc RegEx}}$) over a binary alphabet in a shuffled encoding has an undecidable regular intersection emptiness problem. It turns out, that the problem is already undecidable if the regular expressions do not use alternation or the Kleene star. Thus, also the problem of {\sc String Equivalence Modulo Padding} over a binary alphabet in a shuffled encoding has an undecidable $\intreg$-problem. 
When we consider the {\sc String Equivalence Modulo Padding} problem over a unary alphabet or in a sequential encoding, the problem becomes decidable.
We first define the problem of {\sc Equivalence of Regular Expressions} (adapted from \cite{garey1979computers}).
For a regular expression $E$ we denote with $\lang(E)$ the regular language described by $E$. We use concatenation implicitly and omit the operator symbol. The alternation is represented by $|$-symbols.
\begin{definition}[{\sc Shuffled$\equiv_{\text{RegEx}}$}]
	\ \\
	\textit{Given:} A word $w = e_{1}f_{1}e_{2}f_{2}e_{3}f_{3}\dots e_{n}f_{n}$ over the alphabet $\Sigma \cup \{\emptyset, \epsilon, (,),|,*\}$ such that $E = e_{1}e_{2}e_{3}\dots e_{n}$ and $F = f_{1}f_{2}f_{3}\dots f_{n}$ are regular expressions over the alphabet $\Sigma$ using the operators \emph{alternation}, \emph{concatenation}, and \emph{Kleene star}. Note that one regular expression can be padded with $\epsilon$ or $\emptyset$ if the regular expression are of unequal length. \\
	\textit{Question:} Is $\lang(E) = \lang(F)$?
\end{definition}
The problem of equivalence of the regular expressions is well known to be \PSPACE-complete \cite{stockmeyer1973word}. Since we can change the encoding of an $\equiv_{\text{\sc RegEx}}$
 instance from shuffled to sequential and vice versa in quadratic time, the shuffled version of this problem is also \PSPACE-complete.
We will show that $\intreg$({\sc Shuffled$\equiv_{\text{RegEx}}$}) is undecidable by a reduction from the {\sc PCP} problem~\cite{hopcroft1979introduction}.
For readability reasons, we will refer to words $w \in$ {\sc Shuffled$\equiv_{\text{RegEx}}$} as $w = \ov{e_{1}}{f_{1}} \ \dots\ \ov{e_{n}}{f_{n}}$.
%
%
From a given {\sc PCP} instance we will construct a regular language $L_\text{Reg}$, the words of which will describe all possible solutions of the {\sc PCP} instance. 
The words will consist of two shuffled regular expressions using only the concatenation as an operator. 
By construction, the first regular expression will be a concatenation of strings from the $A$ list of the {\sc PCP} instance while the second regular expression will consists of the concatenated corresponding strings from the $B$ list. 
Since the regular expressions only use concatenation, languages described by them only contain one element each.
The language $L_\text{Reg}$ will contain two shuffled regular expressions describing the same language if and only if the {\sc PCP} instance has a valid solution.
\begin{theorem}\label{thm:regex}
	The problem $\intreg(${\sc Shuffled$\equiv_{\text{RegEx}}$}$)$ is undecidable. 
\end{theorem}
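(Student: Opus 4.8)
The plan is to reduce the (unrestricted, undecidable) {\sc PCP} to $\intreg(${\sc Shuffled$\equiv_{\text{RegEx}}$}$)$. Given a {\sc PCP} instance with lists $A=(a_1,\dots,a_n)$ and $B=(b_1,\dots,b_n)$ over $\Sigma$, I would construct in polynomial time a DFA $\mathcal{A}$ whose language $L_{\text{Reg}}=L(\mathcal{A})$ consists exactly of the shuffled encodings of pairs $(E,F)$ of regular expressions of the form
$$ E = a_{i_1}\,\epsilon^{c_1}\,a_{i_2}\,\epsilon^{c_2}\cdots a_{i_k}\,\epsilon^{c_k}, \qquad F = b_{i_1}\,\epsilon^{d_1}\,b_{i_2}\,\epsilon^{d_2}\cdots b_{i_k}\,\epsilon^{d_k}, $$
where $i_1,\dots,i_k$ ranges over all \emph{nonempty} index sequences. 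Here I treat every individual letter of each $a_{i_j}$ and $b_{i_j}$ as a single regex symbol, use only the implicit concatenation operator, and pad each block with $\epsilon$'s so that the two expressions are aligned symbol by symbol.

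The decisive design choice is to let $\mathcal{A}$ read the word \emph{block by block}, one block per chosen index. For each $i\in[n]$ I would hard-wire a gadget --- a simple path of states --- that reads exactly the character-by-character shuffle
$$ \ov{a_i\,\epsilon^{\,m_i-|a_i|}}{b_i\,\epsilon^{\,m_i-|b_i|}}, \qquad m_i=\max(|a_i|,|b_i|), $$
of $a_i$ (padded by $\epsilon$) with $b_i$ (padded by $\epsilon$), a fixed word of length $2m_i$ over the shuffled alphabet. All gadgets lead back to a common accepting hub state, from which (and from a separate non-accepting start state) any gadget may be entered; this makes the accepted words precisely the nonempty concatenations of blocks. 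Because each block simultaneously contributes the piece $a_i$ to $E$ and the piece $b_i$ to $F$, the structure of $\mathcal{A}$ forces the \emph{same} index sequence $i_1,\dots,i_k$ to appear in both expressions. This index-consistency is exactly what a {\sc PCP} solution requires, and enforcing it inside a single finite automaton --- despite the letter-level shuffle and the differing block lengths $|a_i|\neq|b_i|$ --- is the main obstacle I expect; the per-block $\epsilon$-padding to length $m_i$ is what resolves it, since it keeps each gadget balanced between $e$- and $f$-symbols so that block boundaries coincide in the shuffle.

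For correctness I would argue that, since $E$ and $F$ use only concatenation and $\epsilon$-padding, each describes a singleton language, namely $L(E)=\{a_{i_1}\cdots a_{i_k}\}$ and $L(F)=\{b_{i_1}\cdots b_{i_k}\}$, the $\epsilon$'s vanishing because $\epsilon$ is the identity for concatenation, so the padding chosen only to balance lengths never affects the described strings. Hence a word of $L_{\text{Reg}}$ lies in {\sc Shuffled$\equiv_{\text{RegEx}}$} iff $a_{i_1}\cdots a_{i_k}=b_{i_1}\cdots b_{i_k}$, and therefore $L_{\text{Reg}}\cap$\,{\sc Shuffled$\equiv_{\text{RegEx}}$}$\,\neq\emptyset$ iff the {\sc PCP} instance has a feasible solution. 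Two routine points remain to be checked: the empty index sequence must be excluded, which the non-accepting start state guarantees so that the degenerate equality of empty expressions cannot yield a spurious positive instance; and padding with $\epsilon$ rather than $\emptyset$ is essential, since an $\emptyset$ would collapse every $L(E)$ and $L(F)$ to the empty language and make the reduction vacuously positive. Undecidability of $\intreg(${\sc Shuffled$\equiv_{\text{RegEx}}$}$)$ then follows from the undecidability of {\sc PCP}; note that the construction uses neither alternation nor Kleene star, which is precisely what later yields the sharper statement for {\sc String Equivalence Modulo Padding}.
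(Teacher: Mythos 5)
Your proposal is correct and follows essentially the same route as the paper: a reduction from unrestricted {\sc PCP} in which the regular set consists of all nonempty block-wise shuffles of $\epsilon$-padded pairs $(a_i,b_i)$, so that both encoded expressions are forced to use the same index sequence and each describes a singleton language. Your explicit DFA with per-index gadgets returning to an accepting hub is just a machine-level rendering of the paper's \emph{regular expression} $\bigl(\ov{{a_1}'}{{b_1}'} \,|\, \dots \,|\, \ov{{a_k}'}{{b_k}'}\bigr)^+$, and your correctness argument (including excluding the empty sequence and padding with $\epsilon$ rather than $\emptyset$) matches the paper's proof.
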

\begin{proof}
	We give a reduction {\sc PCP} $\leq \intreg(\text{\sc Shuffled}\equiv_{\text{RegEx}})$ and translate a given {\sc PCP} instance into a regular language $L_\text{Reg}$. We emphasize references to the \emph{regular expression} defining the language $L_\text{Reg}$, while references to the regular expressions encoded in the words of $L_\text{Reg}$ are not emphasized. We also emphasize references to the \emph{regular language} of shuffled regular expressions.
	
	Let $A = a_1, a_2, \dots, a_k$ and $B = b_1, b_2, \dots, b_k$ be a {\sc PCP} instance. 
	We define a \emph{regular expression}, describing a \emph{regular language} $L_\text{Reg}$ of shuffled regular expressions describing concatenations of list elements.
	Let $L_\text{Reg}$ be defined through the \emph{regular expression}
	$$\left(\left. \ov{{a_1}'}{{b_1}'} \right|\left. \left.\ov{{a_2}'}{{b_2}'} \right| \dots \right| \ov{{a_k}'}{{b_k}'} \right)^+$$
	where
	the string $\ov{{a_i}'}{{b_i}'}$ consists of the two shuffled strings $a_i$, $b_i$ where the shorter string is padded with $\epsilon$-symbols at the end until both stings have the same length.
	The $\epsilon$-symbol is here used as an alphabet symbol of the language $L_\text{Reg}$ and refers to the regular expression $\epsilon$ which will be interpreted as $\{\epsilon\}$ and not to the empty word itself.
	Therefore, $L_\text{Reg}$ consists of all possible pairwise concatenations of elements of the lists $A$ and $B$ where the concatenated strings are padded with $\epsilon$-symbols to have the same length.
	
	For every {\sc PCP} instance the described \emph{regular expression} of the language $L_\text{Reg}$ can be computed by a computable total function. It remains to show that the {\sc PCP} instance $A$, $B$ has a solution if and only if $L_\text{Reg}\, \cap\, ${\sc Shuffled$\equiv_{\text{RegEx}}$} $ \neq \emptyset$. More precisely, the intersection will contain all solutions of the {\sc PCP} instance.
	
	First, consider the only if direction. Let $i_1, i_2, \dots , i_n$ be a solution of the {\sc PCP} instance $A, B$ such that $a_{i_1}a_{i_2}\dots a_{i_n} = b_{i_1}b_{i_2}\dots b_{i_n}$. By construction, the \emph{regular language} $L_\text{Reg}$ contains all possible concatenations of the strings $\ov{{a_1}'}{{b_1}'}, \dots, \ov{{a_k}'}{{b_k}'}$ corresponding to the pairs $(a_1, b_1), \dots (a_k, b_k)$ of the strings of the lists $A$ and $B$. Therefore, $L_\text{Reg}$ also contains the word 
	$w = \ov{{a_{i_1}}'}{{b_{i_1}}'}\ov{{a_{i_2}}'}{{b_{i_2}}'}\dots \ov{{a_{i_n}}'}{{b_{i_n}}'}$.
	The word $w$ consists of the two shuffled regular expressions $E = {a_{i_1}}'{a_{i_2}}'\dots {a_{i_k}}'$ and $F = {b_{i_1}}'{b_{i_2}}'\dots {b_{i_k}}'$. Since they are both nonempty strings with padded $\epsilon$'s their described language is a singleton set. By construction, we have $\lang(E) = \{{a_{i_1}}{a_{i_2}}\dots {a_{i_k}}\}$ and $\lang(F) = \{{b_{i_1}}{b_{i_2}}\dots {b_{i_k}}\}$. By assumption is $a_{i_1}a_{i_2}\dots a_{i_n} = b_{i_1}b_{i_2}\dots b_{i_n}$, therefore we have $\lang(E) = \lang(F)$ and $w \in L_{Reg} \cap $ {\sc Shuffled$\equiv_{\text{RegEx}}$}.
	
	For the other direction, assume $L_\text{Reg} \cap $ {\sc Shuffled$\equiv_{\text{RegEx}}$} $\neq \emptyset$. Let $w = \ov{{a_{i_1}}'}{{b_{i_1}}'}\ov{{a_{i_2}}'}{{b_{i_2}}'}\dots \ov{{a_{i_n}}'}{{b_{i_n}}'} \in  L_\text{Reg} \cap $ {\sc Shuffled$\equiv_{\text{RegEx}}$} 
	consists of the two shuffled regular expressions $E = {a_{i_1}}'{a_{i_2}}'\dots {a_{i_k}}'$ and $F = {b_{i_1}}'{b_{i_2}}'\dots {b_{i_k}}'$. By assumption is $\lang(E) = \lang(F)$. Since $\lang(E)$ and $\lang(F)$ each contain only one element, from which the describing regular expressions differ only by padded $\epsilon$-symbols, it holds by construction that $a_{i_1}a_{i_2}\dots a_{i_n} = b_{i_1}b_{i_2}\dots b_{i_n}$. Therefore, $i_1, i_2, \dots , i_n$ is a solution of the {\sc PCP} instance.
\end{proof}

To show undecidability of the $\intreg$({\sc Shuffled$\equiv_{\text{RegEx}}$}) problem we have  made use of only one operator of regular expressions, namely the concatenation. If we restrict the {\sc Shuffled$\equiv_{\text{RegEx}}$} problem to regular expressions using only letters from $\Sigma$, the $\epsilon$-symbol and the concatenation, we get the much easier problem of {\sc Shuffled String Equivalence Modulo Padding}, in short {\sc Shuffled$\equiv_{\text{String}^\epsilon}$}.
Since we are only using the associative operation of concatenation, we can get rid of brackets.
All of the following problems are in the complexity class \LOGSPACE, since they can be solved deterministically using two pointers.

\begin{definition}[{\sc Shuffled$\equiv_{\text{String}^\epsilon}$}]
	\ \\
	\textit{Given:} A word $w = s_{1}t_{1}s_{2}t_{2}s_{3}t_{3}\dots s_{n}t_{n}$ such that $s_i, t_i \in \Sigma \cup \{\epsilon \}$.
	\textit{Question:} Is $h(s_{1}s_{2}s_{3}\dots s_{n}) = h(t_{1}t_{2}t_{3}\dots t_{n})$ where $h: \left(\Sigma \cup \{\epsilon \}\right)^* \to \Sigma^*$ is an erasing homomorphism which leaves all symbols in $\Sigma$ unchanged and deletes the $\epsilon$-symbols.
\end{definition}

\begin{corollary}\label{thm:string}
	The problem \intreg$(${\sc Shuffled$\equiv_{\text{String}^\epsilon}$}$)$ is undecidable.
\end{corollary}
\begin{proof}
	In the proof of Theorem \ref{thm:regex} we have constructed a \emph{regular language} of shuffled regular expressions which described singleton sets by using concatenation and padding with $\epsilon$-symbols. So, the two regular expressions describe the same language if and only if the regular expressions themselves yield the same string under deleting the $\epsilon$-symbols. Therefore, the proof of Theorem \ref{thm:regex} also works for Theorem \ref{thm:string}.
\end{proof}


If we restrict the alphabet $\Sigma$ to singleton sets, the \textsc{Shuffled$\equiv_{\text{String}^\epsilon}$} becomes decidable as this problem, considered as a language, is a context-free language. Alternatively, if we refrain from the shuffled encoding and consider instead a sequential encoding, the problem also becomes decidable. Here, we identify sub-automata which accept prefixes up to the symbol \$ and sub-automata which accept suffixes starting after the symbol \$. We use a homomorphism $h$ to erase the padding symbol $\epsilon$ and check for each pair of prefix and suffix sub-automata $A_P$ and $A_S$ whether $h(\lang(A_P)) \cap h(\lang(A_S)) \neq \emptyset$. 
\begin{definition}[{\sc Unary-Shuffled$\equiv_{\text{String}^\epsilon}$}]
	\ \\
	\textit{Given:} A word $w = s_{1}t_{1}s_{2}t_{2}s_{3}t_{3}\dots s_{n}t_{n}$ such that $s_i,t_i \in \Sigma \cup \{\epsilon \}$ with $\lvert \Sigma \rvert = 1$.\\
	\textit{Question:} Is $h(s_{1}s_{1}s_{1}\dots s_{1}) = h(t_{1}t_{2}t_{3}\dots t_{n})$ where $h: \left(\Sigma \cup \{\epsilon \}\right)^* \to \Sigma^*$ is an erasing homomorphism which leaves all symbols in $\Sigma$ unchanged and deletes the $\epsilon$-symbols.
\end{definition}
\begin{theorem}\label{thm:unaryshuffled}
	The problem \intreg$(${\sc Unary-Shuffled$\equiv_{\text{String}^\epsilon}$}$)$ is decidable.
\end{theorem}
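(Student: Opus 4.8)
The plan is to show that the language {\sc Unary-Shuffled$\equiv_{\text{String}^\epsilon}$} is itself context-free; decidability of its $\intreg$-problem then follows from the classical facts that the context-free languages are effectively closed under intersection with regular languages and have a decidable emptiness problem.

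First I would unwind what the defining condition means over a unary alphabet $\Sigma = \{a\}$. For a well-formed instance $w = s_{1}t_{1}\dots s_{n}t_{n}$ with $s_i, t_i \in \{a, \epsilon\}$, the erasing homomorphism sends $s_{1}\dots s_{n}$ to $a^{p}$ and $t_{1}\dots t_{n}$ to $a^{q}$, where $p = |\{i : s_i = a\}|$ and $q = |\{i : t_i = a\}|$. Hence the condition $h(s_{1}\dots s_{n}) = h(t_{1}\dots t_{n})$ holds if and only if $p = q$, i.e.\ the number of $a$-symbols in odd positions of $w$ equals the number in even positions. The whole problem thus collapses to a single equality between two unbounded counts, which is exactly what a one-counter machine can verify. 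This is the decisive simplification over the binary case of Theorem~\ref{thm:string}, where the \emph{order} of the letters matters and PCP can be encoded.

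Next I would construct a pushdown automaton (equivalently a grammar) for the language. The automaton reads $w$ left to right, keeping the parity of the current position in its finite control so that it knows whether the scanned symbol plays the role of some $s_i$ (odd position) or some $t_i$ (even position), and it rejects any input of odd length or containing a symbol other than $a$ and $\epsilon$. On its stack it maintains the signed difference $p - q$ of the two running counts, using the standard two-symbol encoding of a counter that may be positive or negative above a bottom-of-stack marker: an $a$ in an odd position moves the counter one way and an $a$ in an even position the other, with push/pop behaviour reversed according to the sign currently recorded in the state. It accepts precisely when, after consuming all of $w$, the counter has returned to $0$. This witnesses that {\sc Unary-Shuffled$\equiv_{\text{String}^\epsilon}$} is context-free.

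Finally, given the input DFA $A$ of the $\intreg$-problem, I would form the product of this pushdown automaton with $A$; since context-free languages are effectively closed under intersection with regular languages, $L(A)\,\cap\,${\sc Unary-Shuffled$\equiv_{\text{String}^\epsilon}$} is again context-free and a grammar for it can be computed. As emptiness of context-free languages is decidable, testing whether this intersection is non-empty decides the $\intreg$-problem. The only genuinely technical point is the correctness of the counter construction, in particular handling the sign change when $p - q$ passes through $0$ together with the bookkeeping of position parity; the closure and emptiness steps are entirely routine.
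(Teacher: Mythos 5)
Your proposal is correct and takes essentially the same route as the paper: both observe that over a unary alphabet the condition reduces to equality of the counts of non-$\epsilon$ symbols at odd and even positions, verify this with a (deterministic) pushdown automaton tracking position parity and a counter, and then conclude via effective closure of context-free languages under intersection with regular languages plus decidability of context-free emptiness. The only difference is that you spell out the signed-counter stack construction explicitly, which the paper leaves implicit.
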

\begin{proof}
	The language {\sc Unary-Shuffled$\equiv_{\text{String}^\epsilon}$} is context-free, since for a given word we only have to count the number of letters unequal to $\epsilon$ at the even and at the odd positions in the word. If those numbers are equal, the word is in {\sc Unary-Shuffled$\equiv_{\text{String}^\epsilon}$}. This property can be checked by a deterministic pushdown automaton and hence the language is context-free. Since the context-free languages are closed under intersection with regular languages and have a decidable emptiness problem \cite{hopcroft1969formal}, the problem \intreg$(${\sc Unary-Shuffled$\equiv_{\text{String}^\epsilon}$}$)$ is decidable, too.
\end{proof}

The \intreg\ problem becomes also decidable if we get rid of the shuffled encoding. The following problem (over an arbitrary large alphabet) has a decidable \intreg\ problem as well.

\begin{definition}[{\sc Sequential$\equiv_{\text{String}^\epsilon}$}]
	\ \\
	\textit{Given:} A word $w = s_{1}s_{2}\dots s_{n} \$ t_1t_2 \dots t_{n'}$ such that $s_i, t_i \in \Sigma \cup \{\epsilon \}$.\\
	\textit{Question:} Is $h(s_1s_2\dots s_n) = h(t_1t_2\dots t_{n'})$ where $h: \left(\Sigma \cup \{\epsilon, \$ \}\right)^* \to \left(\Sigma \cup \{\$\}\right)^*$ is an erasing homomorphism which leaves all symbols in $\Sigma \cup \{\$\}$ unchanged and deletes the $\epsilon$-symbols.
\end{definition}
\begin{theorem}\label{thrm:seqstring}
	The problem \intreg$(${\sc Sequential$\equiv_{\text{String}^\epsilon}$}$)$ is decidable.
\end{theorem}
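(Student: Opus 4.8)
The plan is to reduce the question to finitely many instances of the NFA non-emptiness problem via a synchronised product construction. First I would note why the argument of Theorem~\ref{thm:unaryshuffled} cannot simply be reused: the language {\sc Sequential}$\equiv_{\text{String}^\epsilon}$ is \emph{not} context-free, since intersecting it with the regular language $\Sigma^* \$ \Sigma^*$ (words containing no $\epsilon$-symbol) yields the copy language $\{u\$u \mid u \in \Sigma^*\}$, a classical non-context-free language. So closure of the context-free languages under intersection with regular sets is of no help, and a construction tailored to the given DFA $A$ is required.

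The central observation is that in a word $w = u\$v$ both halves are scanned left to right, so the common string $x := h(u) = h(v)$ appears in the \emph{same} order in both parts of the run of $A$. This lets me simulate the two parts at once by a single NFA that guesses $x$ letter by letter and feeds each guessed letter to two ``pebbles'' simultaneously. Concretely, from $A=(Q,\Sigma\cup\{\epsilon,\$\},\delta,q_0,F)$ I first form the NFA $A'$ by turning every transition labelled by the \emph{literal} symbol $\epsilon$ into a spontaneous $\epsilon$-move while keeping the $\Sigma$- and $\$$-transitions as ordinary letters; then reading $x\in\Sigma^*$ in $A'$ from a state $s$ reaches exactly those states reachable in $A$ from $s$ by some word $y$ with $h(y)=x$.

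The main obstacle is the coupling of the two halves through the separator: the part reading $v$ must start in $r=\delta(p,\$)$, where $p$ is the state in which the part reading $u$ \emph{ends}, and $p$ is only known once the first pebble has consumed all of $x$ -- so a naive single left-to-right pass fails. I resolve this by guessing the crossover state $p$ in advance. For each $p\in Q$ with $\delta(p,\$)$ defined I set $r:=\delta(p,\$)$ and build an NFA $N_p$ that is the product of two copies of $A'$ synchronising on $\Sigma$-letters but moving independently on $\epsilon$-moves: its states are $Q\times Q$ with start state $(q_0,r)$; a letter $a\in\Sigma$ sends $(s,t)\mapsto(\delta(s,a),\delta(t,a))$, advancing both pebbles on the same letter of $x$, whereas an $\epsilon$-move advances a single coordinate along an $\epsilon$-labelled transition of $A$; the accepting states are $\{p\}\times F$. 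A run of $N_p$ on some $x$ then witnesses a word $u$ with $h(u)=x$ and $\hat\delta(q_0,u)=p$ together with a word $v$ with $h(v)=x$ and $\hat\delta(r,v)\in F$, that is, exactly a word $u\$v\in L(A)\cap\text{\sc Sequential}\equiv_{\text{String}^\epsilon}$ whose run crosses the $\$$ at state $p$; conversely every such word arises this way with $p=\hat\delta(q_0,u)$. Since the pebble walks use only $\Sigma$- and $\epsilon$-moves and never a $\$$-transition, the simulated $u$ and $v$ contain no $\$$, so $u\$v$ is a properly formatted instance. Hence $L(A)\cap\text{\sc Sequential}\equiv_{\text{String}^\epsilon}\neq\emptyset$ if and only if $L(N_p)\neq\emptyset$ for some $p\in Q$, and as there are only finitely many candidates $p$ and NFA non-emptiness is decidable by a reachability check, the whole procedure is decidable.
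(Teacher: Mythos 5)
Your proof is correct and takes essentially the same route as the paper: guess the state $p$ at which the accepting run crosses the separator \$, erase the literal $\epsilon$-symbols (you via spontaneous $\epsilon$-moves, the paper via closure of regular languages under erasing homomorphisms applied to the sets $S_{q_0,q}$ and $T_{q',q_f}$), and reduce to finitely many regular intersection-emptiness tests --- your product NFA $N_p$ is precisely the standard construction underlying the intersection-emptiness check that the paper invokes as a black box. Your opening observation that the language is not context-free (via the copy language $\{u\$u \mid u \in \Sigma^*\}$) is a correct and welcome justification, absent from the paper, of why the argument of Theorem~\ref{thm:unaryshuffled} cannot be reused here.
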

\begin{proof}
	We define for every pair of states of the automaton $A=(Q, \Sigma, \delta, q_0, F)$ the set of sub-words, which can be read before the $\$$-symbol, between which the $\$$-symbol can be read, and which can be read after the $\$$-symbol.
	\begin{align*}
	&S_{q_0, q} := \{w \in  \left(\Sigma\cup\{\epsilon\}\right)^* \mid \delta(q_0, w) = q \wedge \exists q' \in Q : \delta(q, \$) = q'\}\\
	&\$_{q,q'} := \{\$ \mid \delta(q, \$) = q' \}\\
	&T_{q, q_f} := \{w \in \left(\Sigma\cup\{\epsilon\}\right)^* \mid \delta(q, w) = q_f \wedge q_f \in F \wedge \exists q' \in Q : \delta(q', \$) = q \}
	\end{align*}
	
	Therefore, $R$ can be written as $R = \bigcup_{q, q' \in Q,\,q_f\in F} S_{q_0, q}\$_{q, q'}T_{q', q_f}$.
	Since $A$ is a DFA, there are only finitely many sets $S_{q_0, q}$, $\$_{q,q'}$, and $T_{q, q_f}$ and all of them are regular, since we easily can alter the automaton $A$ to obtain finite automata for each of those languages.
	
	Let $h \colon \left(\Sigma \cup \{\epsilon, \$\}\right)^* \to {\left(\Sigma \cup \{\$\}\right)}^*$ be a homomorphism mapping every symbol form $\Sigma \cup \{\$\}$ to itself and deleting the $\epsilon$-symbols. 
	For every pair of states, the languages $h\left(S_{q_0, q}\right)$, $h\left(\$_{q,q'}\right)$, and $h\left(T_{q, q_f}\right)$ are regular. 
	For two regular languages the intersection emptiness problem is decidable, i.e.\ it is decidable, whether both languages contain a common word \cite{hopcroft1969formal}.
	
	It holds that {\sc Sequential$\equiv_{\text{String}^\epsilon}$}$\cap R \neq \emptyset$ if and only if there exists states $q,q',q_f \in Q$ and a word $v \in \Sigma^*$ such that $v \in h\left(S_{q_0, q}\right) \wedge \$ \in h\left(\$_{q,q'}\right) \wedge v \in h\left(T_{q', q_f}\right)$. Since there are only finitely many states in $Q$, and the membership and intersection-emptiness problems for regular languages are decidable, we can simply test the right-hand condition for every combination of states to decide whether there exists a word $v \in \Sigma^*$ fulfilling the condition.
\end{proof}

Since \intreg$(${\sc Sequential$\equiv_{\text{String}^\epsilon}$}$)$ is already decidable for an arbitrary alphabet it is also decidable for a unary alphabet, hence the problem \intreg$(${\sc Unary-Sequential$\equiv_{\text{String}^\epsilon}$}$)$ is decidable, too.
\begin{remark}
	We have shown that the problem of {\sc Shuffled$\equiv_{\text{RegEx}}$} has an undecidable $\intreg$-problem. The decidability statuses of the $\intreg$-problems for the problems of {\sc Sequential$\equiv_{\text{RegEx}}$}, {\sc Unary-Shuffled$\equiv_{\text{RegEx}}$}, and {\sc Unary-Sequential$\equiv_{\text{RegEx}}$} which are defined similarly to the variations of the $\equiv_{\text{String}^\epsilon}$-problems are still open.
	
	For the problem of {\sc Sequential$\equiv_{\text{RegEx}}$} the reduction from {\sc PCP} fails because we cannot describe the set of all eventual solutions of the {\sc PCP}-instance by a \emph{regular set} of regular expressions. The corresponding list-items in one possible solution are arbitrarily far apart from each other, because the two regular expressions are not in a shuffled but sequential encoding. Therefore, the relation between the corresponding list elements can no longer be generated by a \emph{regular expression}, even if we allow padding $\epsilon$-symbols.
	For the problems {\sc Unary-Shuffled$\equiv_{\text{RegEx}}$} and {\sc Unary-Sequential$\equiv_{\text{RegEx}}$} the reduction from {\sc PCP} fails to prove undecidability of the $\intreg$-problem because the {\sc PCP}-problem over unary alphabets is decidable \cite{rudnicki2003post}.
	
	At the attempt to prove decidability of the $\intreg$-problem for the above listed problems, we have to deal with the fact that we cannot restrict the given regular language to a regular language which only contains correctly encoded problem instances as we can do in all cases of proven $\intreg$ decidability \cite{guler2018deciding,DBLP:conf/dcfs/000219,Wolf:Thesis:2018}. 
	Allowing operators like alternation and star in the regular expressions automatically brings brackets to the regular expressions and therefore the language of all correctly encoded regular expressions is no longer a regular set. The fact that we cannot restrict the given regular language $R$ to a regular language of a known shape, containing the interesting subset of $R$, makes using pumping arguments difficult.
\end{remark}
\section{An Undecidable Intreg Problem About Graphs}
In this section we consider a graph problem with an undecidable \intreg-problem which contrasts the decidability results for graph problems in~\cite{DBLP:journals/corr/abs-2003-05826}.

For a word $w\in \Sigma^*$ and a letter $\sigma\in \Sigma$ we denote with $w_{|\sigma}$ the number of $\sigma$'s in~$w$. We consider directed hyper-multi-graphs with self loops and $2$ to $4$ vertices per edge. 
More formally, we consider graphs of the form $G=(V,E)$, where $V$ is a set of vertices and $E$ a set of edges together with the function $f_E \colon E \to V^{[2..4]}$ which assigns each edge with a tuple consisting of 2 to 4 vertices incident to this edge. An edge is called a \emph{loop} if all of its incident vertices are identical.
We encode $G$ by listing its edges separated by $\$$-signs. Vertices appearing in edges are encoded by strings of $\ta$'s separated by $\#$'s. To extract the encoded graph, we define the following decoding function.
For $m \in \mathbb{N}$, $1 \leq k_i \leq 3, i \leq m$ let
$$\decdhm\left(
\prod^m_{i = 1}\left(
 \ta^{p_i}
\prod^{k_i}_{j = 1}\left(
\#  \ta^{q_{i,j}}
\right)
\$
\right)\right) = G\,,$$
where $G = (V, E)$ with 
$V=\{v_{p_i}\mid i\in [m]\}\cup \bigcup\{v_{q_{i,j}}\mid j \leq k_i\},\ 
E=\{e_1, e_2, \dots, e_m\}$, $f_E \colon e_i \mapsto (v_{p_i},v_{q_{i,1}}, ..., v_{q_{i,k_i}})$.
We present a graph-problem over this class of graphs for which its \intreg-problem is undecidable by encoding the sets of derivation-trees of two given context-free grammars in Chomsky-normal-form (CNF for short) into a regular set of directed hyper-multi-graphs. The languages of the two grammars will share a common word $w$ if and only if the intersection of the constructed regular language with the graph-problem is non-empty and contains the two derivations of $w$.
We call $\mathcal{G} = (V, T, P, S)$ a context free grammar in CNF, if $V$ is a finite set of variables, $T$ a finite set of terminal, $P$ a set of derivation rules of the from $A \to BC$ or $A \to a$ with $A,B,C \in V$, $a\in T$, and $S\in V$ the start variable. 
We first give the construction and then define the graph problem \textsc{Embedded Derivation Trees}, \textsc{\GraphProblemFancyName} for~short.
\begin{theorem}
	\label{thm:graph}
	\intreg(\textsc{\GraphProblemFancyName}) is undecidable.
\end{theorem}
\begin{proof}
	Let $\mathcal{G}_1 = (V_1, T, P_1, S')$, $\mathcal{G}_2 = (V_2, T, P_2, S'')$ be two context free grammars in CNF. We alter them, by introducing two new variables $S_1$ and $S_2$, to the grammars  $\mathcal{G}_1 = (V_1 \cup \{S_1\}, T, P_1 \cup \{S_1\to S'\}, S_1)$, $\mathcal{G}_2 = (V_2 \cup \{S_2\}, T, P_2 \cup \{S_2 \to S''\}, S_2)$. From now on we will identify $V_1 \cup \{S_1\}$ as $V_1$ and $V_2 \cup \{S_2\}$ as $V_2$. W.l.o.g.~assume that the sets $V_1, V_2$ are disjoint and both grammars share the terminal alphabet $T$.
	Note that the following problem is undecidable: Is there a word $w \in T^*$ such that $w$ can be derived from $\mathcal{G}_1$ and from $\mathcal{G}_2$?
	
	Let $m_1 = |V_1|$, $m_2 = |V_2|$, $t = \frac{|T| (|T|+1)}{2}$, 
	$n = m_1 + m_2 + 2t + 2$.
	We construct a regular set $R = R_1 \cdot R_2$, where $R_1$ (respectively $R_2$) is defined as follows:
	We fix an order on the elements in $V_1, V_2, T$ such that $S_1$ is the first element in $V_1$ and $S_2$ is the first element in $V_2$. We refer to the $i$'th element of a set $\mathcal{S}$ as $\mathcal{S}[i]$.
	Let $V_1[s'] = S'$, $V_2[s''] = S''$ for integers $s'$ and $s''$. 
	For the derivation rule $S_1 \to S'$ in $\mathcal{G}_1$ we define the regular expression $r_{s_1} =  \ta^0\#  \ta^1\# \ta^0\#  \ta^{s'}\$$ and for the derivation rule $S_2 \to S''$ in $\mathcal{G}_2$ we define 
	$r_{s_2} =  \ta^{n-1}\#  \ta^{m_1+1}\# \ta^{n-1}\#  \ta^{m_1+s''}\$$
	For every derivation rule $p_1 = V_1[i] \to V_1[j]V_1[k]$, $i, j, k \leq m_1$ in $P_1$ we define the regular expression 
	$r_{p_1} =  \ta^0 \# \ta^i(\ta^n)^*\#  \ta^j(\ta^n)^* \#  \ta^k(\ta^n)^*\$.$
	For a derivation rule $p_2 = V_2[i] \to V_2[j]V_2[k]$, $i, j, k \leq m_2$ in $P_2$ we 
	define 
	$r_{p_2} = 
	 \ta^{n-1} \# \ta^{m_1+i}(\ta^n)^*\#  \ta^{m_1+j}(\ta^n)^* \# \ta^{m_1+k}(\ta^n)^*\$$.
	%
	We define for each $j \in [|T|]$ and $b \in \{1, 2\}$ a regular expression which encodes a cycle of length $j$ consisting of binary edges. We call these cycles \emph{leave-cycle} later. 
	\begin{align*}
	r_{lc}^{j{b}}= 
	%
	&\prod_{k=m_1+m_2+({b}-1)t+\frac{j(j+1)}{2}}^{m_1+m_2+({b}-1)t+\frac{(j+1)(j+2)}{2}-2}
	\left(  
	\ta^{k}(\ta^n)^* \# \ta^{k+1}(\ta^n)^*\$\right)\\
	&
	\ta^{m_1+m_2+({b}-1)t+\frac{(j+1)(j+2)}{2}-1}(\ta^n)^* \# \ta^{m_1+m_2+(b-1)t+\frac{j(j+1)}{2}}(\ta^n)^*\$
	\end{align*}
	For derivation rules $q_1 = V_1[i] \to T[j]$ in $P_1$ and $q_2 = V_2[i] \to T[j]$ in $P_2$ we define: 
	$r_{q_1} = 
	\ta^0\#  \ta^i(\ta^n)^*\#  \ta^{m_1+m_2+\frac{j(j+1)}{2}}(\ta^n)^*\$ r_{{lc}}^{j1}$, and\\
	$r_{q_2} = 
	\ta^{n-1} \#  \ta^{m_1+i} (\ta^n)^* \#  \ta^{m_1+m_2+t+\frac{j(j+1)}{2}} (\ta^n)^* \$r_{lc}^{j2}$.
	We are now ready to define $R_1$ and $R_2$. We set 
	$R_i = r_{s_i}\left(\bigcup_{p_i \in P_i} r_{p_i} \bigcup_{q_i \in P_i}r_{q_i}\right)^*$ for $i \in \{1, 2\}$.
	
	We now define our graph property such that it filters out the encoded graphs in the regular set which consists of two derivation trees, one from $\mathcal{G}_1$ and one from $\mathcal{G}_2$, which derivate the same word. It is helpful to consider Figure~\ref{fig:derivation} while reading though the following arguments.
	
	\begin{definition}[Multi-Graph Embedding $\phi$]
		\ \\
		Let $G=(V, E)$ be a directed hyper-multi-graph such that each edge contains two, three, or four vertices.
		The multi-graph embedding $\phi$ maps $G$ onto a multi-graph $\phi(G) = G_m = (V_m, E_m)$ with $E_m \subseteq V_m \times V_m$ in the following way: $V_m = V$, for a 4-nary edge $(a, b, c, d) \in E$ we add the edges $(a, c), (b, c), (b, d)$ to $E_m$.
		For a ternary edge $(a, b, c) \in E$ we add the edges $(a, c), (b, c)$ to $E_m$. Binary edges are simply added to $E_m$.
	\end{definition}
	\begin{definition}[\textsc{\GraphProblemFancyName}]
		\ \\
		\textbf{Input:} A directed hyper-multi-graph $G = (V, E)$ with $f_E \colon E \to V^{[2..4]}$.\\
		\textbf{Question:} 
		Does the multi-graph embedding $\phi(G)$ consists fo two connected components $G_{1m}$ and $G_{2m}$ such that the following holds.
		$G_{1m}$ contains a vertex $v_1$ and $G_{2m}$ contains a vertex $v_2$, such that $G_{1m}\backslash\{v_1\}$ and $G_{2m}\backslash\{v_2\}$ are (directed) binary trees (where edges are pointing from parents to children) in which the leave layer consists of directed cycles (called \emph{leave-cycles}). Exactly the root-node and the parents of leave-cycles have out-degree one, all other nodes which are not part of a leave-cycle have out-degree 2. 
		The node $v_1$ is connected to exactly one child of each parent node in $G_{1m}\backslash\{v_1\}$ except for the root as here $v_1$ is pointing to the root and not to the child. The only node pointing towards $v_1$ is the root of $G_{1m}\backslash\{v_1\}$. The node $v_1$ has no further connections. The same holds for $v_2$ with respect to $G_{2m}\backslash\{v_2\}$. If $G_{1m}$ and $G_{2m}$ are drawn such that $v_1$ and $v_2$ always point to the left child, then the sequence of lengths of the leave-cycles of $G_{1m}$ and $G_{2m}$ (read from left to right) must coincide. Both graphs must not contain multi-edges and loops are only allowed as a leave-cycle. The leave-cycles are not connected to each other.
	\end{definition}
	We will first argue that for any $w \in R$ with $\decdhm(w)$ being a positive instance of \textsc{\GraphProblemFancyName} the sub-graphs $G_{1m}$ and $G_{2m}$ of $\phi(\decdhm(w))$ must correspond to two derivation trees, one for $\mathcal{G}_1$ and one for $\mathcal{G}_2$.
	
	Note that for $i \equiv 0 \pmod{n}$ and $j \equiv -1 \pmod{n}$ the only vertex labels $\ta^i(\#|\$)$ and $\ta^j(\#|\$)$ which can be a factor of a word in $R$ are $\ta^0\#$ and $\ta^{n-1}\#$. Especially, there are no factors of the form 
	$(\ta^n)^k(\#|\$)$ or $(\ta^{n-1})(\ta^n)^k(\#|\$)$ with $k > 0$ and the vertex $\ta^0\#$ is only appearing in sub-graphs corresponding to derivation rules of $\mathcal{G}_1$ whereas $\ta^{n-1}\#$ is only appearing in sub-graphs corresponding to derivation rules of $\mathcal{G}_2$. Hence, for a graph $G=\phi(\decdhm(w))$ with $w\in R$ in order to consist of two disjoint graph $G_{1m}$ and $G_{2m}$ one of them must contain a vertex encoded by $\ta^0\#$ and hence be constructed by $\mathcal{G}_1$ and the other one must contain a vertex encoded by $\ta^{n-1}\#$ and be constructed by $\mathcal{G}_2$ as one of this elements is part of any regular expression $r_{s_i}, r_{p_i}, r_{q_i}$.
	
	Note that by the definition of $R_1$ and $R_2$ each string $w \in R$ contains exactly one factor encoding the derivation rule $S_1 \to S'$ and one factor encoding the derivation rule $S_2 \to S''$. There are no other derivation rules in which $S_1$ or $S_2$ appear. Hence, there will be no factor $ \ta^1(\ta^n)^k\#$ and $ \ta^{m_1+1}(\ta^n)^k\#$ in $w$ with $k > 0$. 
	%
	As $r_{s_1}$ and $r_{s_2}$ are the only regular expressions allowing to create an edge, the multi-edge embedding of which creates an arc pointing towards $v_1$ (encoded by $ \ta^0 \#$) and $v_2$ (encoded by $\ta^{n-1}\#$), the root of the tree\footnote{We interpret the cycles in the leave-layer as leaves and hence interpret the graph as a tree despite the fact that it contains cycles.} $G_{m1}\backslash\{v_1\}$  necessarily be $S_1$ with the child $S'$ and the root of the tree $G_{m2}\backslash\{v_2\}$ needs to be $S_2$ with child $S''$. The sub-tree starting in $S'$ will then correspond to a derivation tree of $\mathcal{G}_1$, when the leave-cycle of length $i$ is interpreted as the $i$-th letter of $T$, and the sub-tree starting in $S''$ will correspond to a derivation tree of $\mathcal{G}_2$. 

	W.l.o.g.\ we will focus on $G_{m1}$.
	By the definition of \textsc{\GraphProblemFancyName} $G_{1m}$ must be a binary tree with leave-cycles and by previous arguments have the root $S_1$ with the single child $S'$, all other internal parent-nodes have out-degree two. As $G_{1m}$ does not contain any multi-edges and loops are only allowed as leave-cycles, every inner node $V_{1m}[i]$ has exactly two appearances of its encoding $\ta^k\#$ in $w$ namely in the 4-nary hyper-edge where it appears as one child and in the 4-nary hyper-edge where it appears as the parent node. The first edge corresponds to a derivation where the variable represented by $V_{1m}[i]$ is 'created' and the second edge to a derivation where this variable is 'consumed'. Hence, following the inner nodes gives us a derivation tree for derivations of the form $A \to BC$ in $P_1$. 
	
	Every member of a cycle of length $k$ has its own domain of representatives which is disjoint with the domains of the other elements of the cycle. It is also disjoint with the domains of elements of cycles with different lengths. 
	We show that if $w \in R$  encodes a positive instance of \textsc{\GraphProblemFancyName}, then for each leave-cycle $C_i$ of length $k$ which is the child of a node $V_{1m}[j']$ encoding the variable $V_1[j]$, the description of $C_i$ is created by the regular expression $r_{q_1}$ which encodes the derivation $V_1[j] \to T[k]$. 
	Let $C_i[1]$ be the first node in the cycle $C_i$, i.e., the child of $V_{1m}[j']$. Then, there is a factor $uv$ in $w$ encoding the edge $(V_{1m}[j'], C_i[1])$ where $u$ encodes the node $V_{1m}[j']$ and $v$ encodes the node $C_i[1]$. We know that $u_{|\ta} \equiv j \pmod{n}$ and $v_{|\ta} \equiv m_1+m_2+\frac{k(k+1)}{2} \pmod{n}$. By the definition of $R$ the node $C_i[1]$ can only point to a node $C_i[2]$ encoded by a factor $x$ for which $x_{|\ta} \equiv m_1+m_2+\frac{k(k+1)}{2}+1 \pmod{n}$ for $k \geq 2$, or to itself for $k = 1$, but the only regular expressions which allow to create such a factor correspond to derivation rules $A \to T[k]$.  Indeed single edges between the same type of nodes (where the number of $\ta$'s have the same remainder modulo $n$) can be exchanged between different derivation rules with the same letter $T[k]$ on the right side 
	 but the number of nodes in a cycle can not be altered that way without losing a cycle structure or introducing forbidden multi-edges.
	Hence, we can assume that $C_i$ is created by a regular expression encoding the derivation rule $V_1[j] \to T[k]$.
	
	Replacing cycles of length $k$ by the corresponding letters $T[k]$ completes our derivation tree constructed by the inner nodes with derivations of terminals. As the trees $G_{1m}$ and $G_{2m}$ have the same sequence of cycle lengths in the leave-level if the child pointing to $v_1$, respectively $v_2$ is drawn as the left child, the constructed derivations derive the same word.
	
	For the other direction we can construct a word $w\in R$ for which the graph $\decdhm(w)$ is a positive instance of \textsc{\GraphProblemFancyName} from the two derivations of a common word $x \in \lang(\mathcal{G}_1) \cap \lang(\mathcal{G}_2)$ in the following way. We begin with the derivation tree for $\mathcal{G}_1$ and continue in the same way for the derivation tree for $\mathcal{G}_2$. We go through the derivation tree from the root to the leaves and in every level from left to right. It is clear how to encode the first derivation step $S_1 \Rightarrow S'$. For all other derivation steps we define an index $k$ and increase it with every new considered appearance of a variable in a derivation such that $k = 0$ for $S'$ as the child of the root. Then, we encode a node with variable $V_1[i]$ by a factor $ \ta^{i+kn}\#$. With a consistent variable encoding, the regular expressions $r_{p_1}$ tells us how to encode each derivation.
	For the terminal derivations we go from left to right. Let $k'$ be the index (in this sequence) of a terminal $T[j]$ derived from a variable $V_1[i]$ beginning with $k= 1$. Then, we use the regular expression $r_{q_1}$ corresponding to the derivation $q_1 = V[i] \to T[j]$ to encode a cycle where each element of the cycle is encoded by a factor of the form $\ta^{\ell + k'n}\#$ where $\frac{j(j+1)}{2} \leq \ell < \frac{(j+1)(j+2)}{2} < n$.
	Clearly $w$ encodes a positive instance of \textsc{\GraphProblemFancyName}.
\end{proof}

\begin{figure}[H]
	\centering
	\begin{tikzpicture}
	\node (S1) at (0,0) {$S_1$};
	\node[below of=S1] (S') {$S'$};
	\node[below left of=S'] (A1) {$A$};
	\node[below right of=S'] (B1) {$B$};
	\node[below left of=A1] (C1) {$C$};
	\node[right of=C1] (D1) {$D$};
	\node[below right of=B1] (C2) {$C$};
	\node[left of=C2] (B2) {$B$};
	\node[below of=C1] (11) {$1$};
	\node[below of=D1] (21) {$2$};
	\node[below of=B2] (31) {$3$};
	\node[below of=C2] (12) {$1$};
	
	\path
	(S1) edge (S')
	(S') edge (A1)
	(S') edge (B1)
	(A1) edge (C1)
	(A1) edge (D1)
	(B1) edge (B2)
	(B1) edge (C2)
	(C1) edge (11)
	(D1) edge (21)
	(B2) edge (31)
	(C2) edge (12);
	
	\node (S2) at (6,0) {$S_2$};
	\node[below of=S2] (S'') {$S''$};
	\node[below left of=S''] (E1) {$E$};
	\node[below right of=S''] (F1) {$F$};
	\node[below left of=F1] (G1) {$G$};
	\node[below right of=F1] (H1) {$H$};
	\node[below left of=H1] (H2) {$H$};
	\node[below right of=H1] (E2) {$E$};
	\node[below of=E1] (13) {$1$};
	\node[below of=G1] (23) {$2$};
	\node[below of=H2] (33) {$3$};
	\node[below of=E2] (14) {$1$};
	
	\path
	(S2) edge (S'')
	(S'') edge (E1)
	(S'') edge (F1)
	(F1) edge (G1)
	(F1) edge (H1)
	(H1) edge (H2)
	(H1) edge (E2)
	(E1) edge (13)
	(G1) edge (23)
	(H2) edge (33)
	(E2) edge (14);

	\end{tikzpicture}
	\ \\
	
	\vspace{2cm}
	\scalebox{0.5}{
		\begin{tikzpicture}
		
		\tikzset{every node/.style={state, node distance=3cm, minimum size=1.6cm}}
		\node (0) at (-5, 0) {$0$};
		\node (S1) at (0,0) {${1}$};
		\node[below of=S1] (S') {${2}$};
		\node[below left of=S'] (A1) {${3+1 n}$};
		\node[below right of=S'] (B1) {${4+2 n}$};
		\node[below left of=A1] (C1) {${5+3 n}$};
		\node[right of=C1] (D1) {${6+4 n}$};
		\node[below right of=B1] (C2) {${5+5 n}$};
		\node[left of=C2] (B2) {${4+6 n}$};
		\node[below of=C1] (11) {${13+1 n}$};
		
		\node[below of=D1] (21) {${14+2 n}$};
		\node[below of=21] (212) {${15+2 n}$};
		
		\node[below of=B2] (31) {${16+3 n}$};
		\node[below of=31] (312) {${17+3 n}$};
		\node[right of=312] (313) {${18+3 n}$};
		
		\node[below of=C2] (12) {${13+4 n}$};
		
		\path[->,ultra thick]
		(S1) edge (S')
		(S') edge (A1)
		(S') edge (B1)
		(A1) edge (C1)
		(A1) edge (D1)
		(B1) edge (B2)
		(B1) edge (C2)
		(C1) edge (11)
		(11) edge[loop below] (11)
		(D1) edge (21)
		(21) edge[bend left] (212)
		(212) edge[bend left] (21)
		(B2) edge (31)
		(31) edge (312)
		(312) edge (313)
		(313) edge (31)
		(C2) edge (12)
		(12) edge[loop below] (12);
		
		\path[->,ultra thick, color=red]
		(0) edge[bend left] (S1)
		(S1) edge (0)
		(0) edge[bend left] (A1)
		(0) edge[bend left] (C1)
		(0) edge[bend left] (B2)
		(0) edge[bend right] (11)
		(0) edge[bend right] (21)
		(0) edge[bend right] (31)
		(0) edge[bend right] (12);
		
		\node (19) at (17, 0) {${25}$};
		\node (S2) at (12,0) {${7}$};
		\node[below of=S2] (S'') {${8}$};
		\node[below left of=S''] (E1) {${9+1 n}$};
		\node[below right of=S''] (F1) {${10+2 n}$};
		\node[below left of=F1] (G1) {${11+3 n}$};
		\node[below right of=F1] (H1) {${12+4 n}$};
		\node[below left of=H1] (H2) {${12+5 n}$};
		\node[below right of=H1] (E2) {${9+6 n}$};
		\node[below of=E1] (13) {${19+1 n}$};
		\node[below of=G1] (23) {${20+2 n}$};
		\node[below of=23] (232) {${21+2 n}$};
		
		\node[below of=H2] (33) {${22+3 n}$};
		\node[below of=33] (332) {${23+3 n}$};
		\node[right of=332] (333) {${24+3 n}$};
		
		\node[below of=E2] (14) {${19+4 n}$};
		
		\path[->,ultra thick]
		(S2) edge (S'')
		(S'') edge (E1)
		(S'') edge (F1)
		(F1) edge (G1)
		(F1) edge (H1)
		(H1) edge (H2)
		(H1) edge (E2)
		(E1) edge (13)
		(13) edge[loop below] (13)
		(G1) edge (23)
		(23) edge[bend left] (232)
		(232) edge[bend left] (23)
		(H2) edge (33)
		(33) edge (332)
		(332) edge (333)
		(333) edge (33)
		(E2) edge (14)
		(14) edge[loop below] (14);
		
		\path[->,ultra thick, color=red]
		(19) edge[bend right] (S2)
		(S2) edge (19)
		(19) edge[bend right] (E1)
		(19) edge[bend right] (G1)
		(19) edge[bend right] (H2)
		(19) edge[bend left] (13)
		(19) edge[bend left] (23)
		(19) edge[bend left] (33)
		(19) edge[bend left] (14);
		
		\end{tikzpicture}
	}
	\caption{In the top two derivation trees for the word 1231 are drawn. Below is the multi-graph embedding of a corresponding instance of \textsc{\GraphProblemFancyName}. The chosen orders of the variable and letters are: $S_1, S', A, B, C, D$; $S_2, S'', E, F, G, H$; $1, 2, 3$; $n = 26$. For every node the number of letters $\ta$ in its encoding is depicted.}
	\label{fig:derivation}
\end{figure}
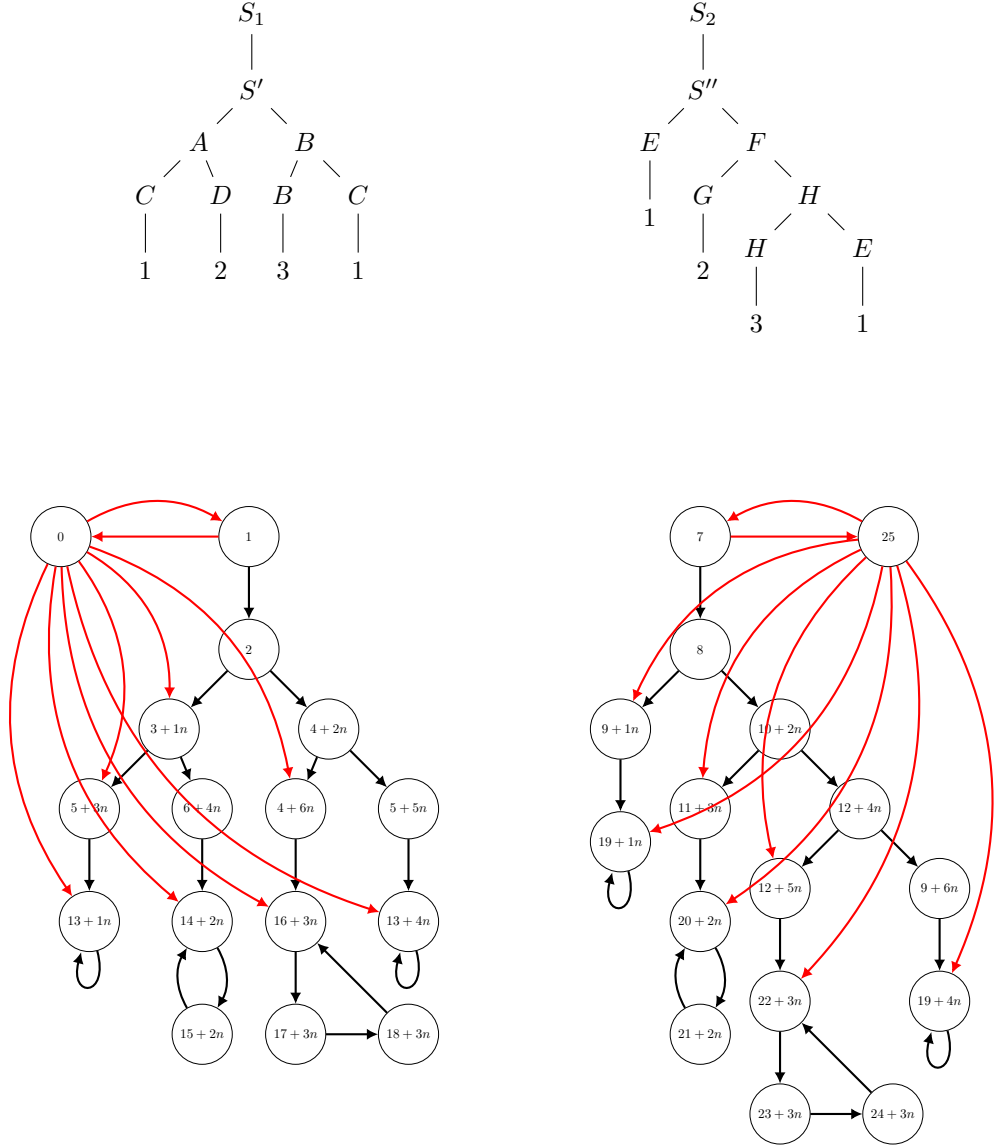
%
%
%
%
%
%
%
 \bibliographystyle{plain}
 \bibliography{mylit}

\end{document}